\def\Pr{\mathrm{Pr}}
\def\gf{{\sf g}}
\newtheorem{theorem}{Theorem}
\newtheorem{proposition}{Proposition}
\newtheorem{remark}{Remark}
\begin{document}

\title{\huge The Gaussian Wiretap Channel with a Helping Interferer}


\author{\authorblockN{Xiaojun~Tang\authorrefmark{1},
Ruoheng~Liu\authorrefmark{2},
Predrag~Spasojevi\'{c}\authorrefmark{1}, and
H.~Vincent~Poor\authorrefmark{2} }
\authorblockA{\authorrefmark{1}Rutgers University, \{xtang,spasojev\}@winlab.rutgers.edu}
\authorblockA{\authorrefmark{2}Princeton University, \{rliu,poor\}@princeton.edu}
\thanks{This research was supported by the National Science Foundation
under Grants ANI-03-38807, CNS-06-25637 and CCF-07-28208.}}


\maketitle

\vspace{-0.5cm}
\begin{abstract}
Due to the broadcast nature of the wireless medium, wireless communication is
susceptible to adversarial eavesdropping. This paper describes how
eavesdropping can potentially be defeated by exploiting the superposition nature of the
wireless medium. A Gaussian wire-tap channel with a helping interferer (WTC-HI)
is considered in which a transmitter sends confidential messages to its
intended receiver in the presence of a passive eavesdropper and with the help
of an interferer. The interferer, which does not know the confidential message
assists the confidential message transmission by sending a signal that is
independent of the transmitted message. An achievable secrecy rate and a
Sato-type upper bound on the secrecy capacity are given for the Gaussian
WTC-HI. Through numerical analysis, it is found that the upper bound is close
to the achievable secrecy rate when the interference is weak for symmetric
interference channels, and under more general conditions for asymmetric
Gaussian interference channels.
\end{abstract}

\section{Introduction}\label{sec:intro}

Broadcast and superposition are two fundamental properties of the wireless
medium. Due to the broadcast nature, wireless transmission can be heard by
multiple receivers with possibly different signal strengths. Due to the
superposition nature, a receiver observes a signal that is a superposition of
multiple simultaneous transmissions. From the \textit{secure communication}
point of view, the two properties are interwoven and pose a number of security
issues. In particular, the broadcast nature makes wireless transmission
susceptible to \textit{eavesdropping} since anyone within the communication
range can listen and
possibly extract information. 
A helper can pit one property of the wireless medium against
security issues caused by the other. 
In this paper, we consider the case in which a helper functions as an
interferer to improve the secrecy level of a communication session that is
compromised by a passive eavesdropper. This phenomenon illustrates that
superposition can \emph{enhance} security.

As depicted in Fig. \ref{channel}, we study the problem in which a
transmitter sends confidential messages to an intended receiver with
the help of an interferer, in the presence of a passive
eavesdropper. We call this model the \textit{wiretap channel with a
helping interferer} (WTC-HI). Here, it is desirable to minimize the
leakage of information to the eavesdropper. The secrecy level, i.e.,
the level of ignorance of the eavesdropper with respect to the
confidential message, is measured by the equivocation rate. This
information-theoretic approach was introduced by Wyner
\cite{Wyner:BSTJ:75} for the \textit{wiretap channel} problem, in
which a single source-destination communication is eavesdropped upon
via a degraded channel. Wyner's formulation was generalized by
Csisz{\'{a}}r and K{\"{o}}rner who determined the capacity region of
the broadcast channel with confidential messages
\cite{Csiszar:IT:78}. The Gaussian wiretap channel was considered in
\cite{Leung-Yan-Cheong:IT:78}. The central idea is that the
transmitter uses stochastic encoding \cite{Csiszar:IT:78} to
introduce \emph{randomness}, and hence increase secrecy. In the
WTC-HI model, the helper provides additional randomization via
stochastic encoding without knowing the transmitted message.

\begin{figure}
  \centering
  \includegraphics[width=2.8in]{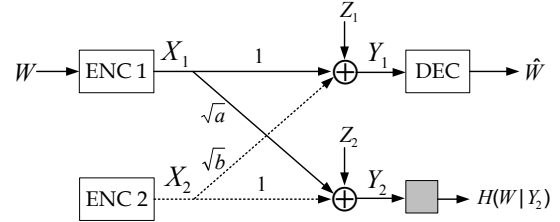}\\
  \caption{A Gaussian wiretap channel with a helping interferer.}\label{channel}
   \vspace{-0.25in}
\end{figure}

In this paper, we give an achievable secrecy rate for the Gaussian WTC-HI under
the requirement that the eavesdropper is kept in total ignorance with respect
to the message for the intended receiver. The results show that the interferer
can indeed increase the secrecy level, and that a positive secrecy rate can be
achieved even when the source-destination channel is worse than the
source-eavesdropper channel. We also describe a power control strategy for
maximizing the achievable secrecy rate.
In addition, we provide a Sato-type upper bound on the secrecy capacity of the
Gaussian WTC-HI. Through numerical analysis, we find that the upper bound is
close to the achievable secrecy rate when the interference is weak for
symmetric interference channels, and under more general conditions for
asymmetric Gaussian WTC-HIs.

Related work includes the multiple access channel with confidential
messages\cite{Liang:IT:06,Liu:ISIT:06,Tekin:IT:07,Tang:ITW:07}, the
interference channel with confidential messages
\cite{Liu:IT:07,Liang:Allerton:07}, and the relay channel with confidential
messages \cite{Oohama:ITW:01,Lai:IT:06,Yusel:CISS:07}. Our achievable scheme
can be considered to be a generalization of the two schemes of
\cite{Tekin:IT:07} and \cite{Lai:IT:06}. The cooperative jamming scheme of
\cite{Tekin:IT:07} considers the situation in which encoder 2 generates independent Gaussian
noise. This scheme does not employ any structure in the transmitted signal. The
noise forwarding scheme of \cite{Lai:IT:06} requires that the interferer's
codewords can always be decoded by the intended receiver which is not necessary
in our scheme. In addition, no work describes a computable upper bound on the
secrecy capacity of the Gaussian WTC-HI.

The remainder of the paper is organized as follows. Section~\ref{sec:model}
describes the system model and problem formulation. Section \ref{sec:results}
states our achievability results. Section \ref{sec:sato} gives the upper bound.
Section~\ref{sec:numerical} illustrate the results through some numerical
examples. The paper is concluded in Section~\ref{sec:conclusions}.

\section{System Model}\label{sec:model}

The system consists of a transmitter, an intended receiver, a
helping interferer, and a passive eavesdropper. The transmitter
sends a confidential message $W$ to the intended receiver with the
help from an \textit{independent} interferer, in the presence of
passive but \textit{intelligent} eavesdropper (who knows both
codebooks). As illustrated in Fig.~\ref{channel}, the channel
outputs at the intended receiver and the eavesdropper can be written
as
\begin{subequations}
\begin{eqnarray}
  Y_{1,k} &=& X_{1,k} +\sqrt{b}X_{2,k} + Z_{1,k}, \label{signalY} \\
  Y_{2,k} &=& \sqrt{a}X_{1,k} + X_{2,k} + Z_{2,k}, \label{signalZ}
\end{eqnarray}
\end{subequations}
for $k=1, \dots, n$, where $\{Z_{1,k}\}$ and $\{Z_{2,k}\}$ are sequences of
independent and identically distributed zero-mean Gaussian noise variables with
unit variances. The channel inputs $X_{1,k}$ and $X_{2,k}$ satisfy average
block power constraints of the form
\begin{equation}\label{power}
  \frac{1}{n}\sum_{k=1}^{n}E[X_{1,k}^2] \leq \bar{P}_1 ~~\mbox{and}~~ \frac{1}{n}\sum_{k=1}^{n}E[X_{2,k}^2] \leq
  \bar{P}_2.
\end{equation}

The transmitter uses encoder 1 to encode confidential message $w \in
\mathcal{W} = \{1,\dots, M\}$ into $x^n$ and sends it to the
intended receiver in $n$ channel uses. A stochastic encoder
$f_1$ is specified by a matrix of conditional
probabilities $f_1(x_{1,k}|w)$, where $x_{1,k} \in \mathcal{X}$, $w
\in \mathcal{W}$, $\sum_{x_{1,k}}f_1(x_{1,k}|w)=1$ for all
$k=1,\dots, n$, and $f_1(x_{1,k}|w)$ is the probability that encoder
1 outputs $x_{1,k}$ when message $w$ is being sent. The helper
generates its output $x_{2,k}$ randomly and can be considered as
using another stochastic encoder $f_2$, which is specified by a
matrix of conditional probabilities $f_{2}(x_{2,k})$ with $x_{2,k}
\in \mathcal{X}_{2}$ and $\sum_{x_{2,k}}f_{2}(x_{2,k})=1.$ Hence,
encoder 1 uses stochastic encoding to introduce \textit{randomness}
and increase secrecy. Additional randomization is provided by the
helper and the secrecy is further increased.

The decoder uses the output sequence $y_1^n$ to compute its estimate
$\hat{w}$ of $w$. The decoding function is specified by a
(deterministic) mapping $\phi: \mathcal{Y}_1^n \rightarrow
\mathcal{W}$.

An ($M,n,P_e$) code for the Gaussian WTC-HI consists of two sets of
$n$ encoding functions ${f_{1,k}}$ and ${f_{2,k}}$, $k=1,\dots,n$
and a decoding function $\phi$ so that its average probability of
error is

\begin{equation}\label{pe}
    P_e=\frac{1}{M}\sum_{w}\Pr\left\{\phi(Y_1^n) \neq w | w
    ~\mbox{sent}\right\}.
\end{equation}

The secrecy level (level of ignorance of the eavesdropper with respect to the
confidential message $w$) is measured by the equivocation rate\footnote{The
secrecy defined by equivocation rate is weak and can be strengthened using
extractor functions without loss of secrecy rate as shown in
\cite{Maurer:EUROCRYPT:00}.} $\frac{1}{n}H(W|Y_2^n).$

A secrecy rate $R_s$ is achievable for the Gaussian WTC-HI if, for
any $\epsilon>0$, there exists an ($M,n,P_e$) code so that

\begin{equation}\label{ach_def1}
    M \geq 2^{nR_s}, ~ P_e \leq \epsilon
\end{equation}
\begin{equation}\label{ach_def2}
\text{and} \qquad  R_s - \frac{1}{n}H(W|Y_2^n) \leq \epsilon \quad
\qquad ~
\end{equation}
for all sufficiently large $n$. The secrecy capacity is the maximal
achievable secrecy rate.

\section{Achievable Secrecy Rate}\label{sec:results}

In this section, we consider an achievable secrecy rate by assuming that the
transmitter and the interferer transmit with powers $P_1 \leq \bar{P}_1$ and
$P_2 \leq \bar{P}_2$, respectively. We address the power control issue in
Subsection \ref{sec:power}.

\subsection{Achievable Secrecy Rate}

\begin{theorem}\label{AchS}
The following secrecy rate is achievable for the Gaussian WTC-HI:
\begin{eqnarray}\label{rate}
R_s(P_1, P_2) = \left\{ \begin{array}{ll}
  0  &\mbox{if $a \geq 1+P_2$,}\\
  R_s^{\mathrm{I}}(P_1, P_2) &\mbox{if $1 \leq a < 1+P_2$,}\\
  R_s^{\mathrm{II}}(P_1, P_2)  &\mbox{if $a<1$,}
  \end{array} \right.
\end{eqnarray}
where $R_s^{\mathrm{I}}(P_1, P_2)$ and $R_s^{\mathrm{II}}(P_1, P_2)$ are given
by
\begin{align}
 &R_s^{\mathrm{I}}(P_1, P_2) = \notag  \\
  &\left\{ \begin{array}{ll}
  \gf(P_1)- \gf(\frac{aP_1}{1+P_2})  &\mbox{if $b \geq 1+P_1$,}\\
  \bigl[\gf(P_1+bP_2)- \gf(aP_1+P_2)\bigr]^{+} &\mbox{if $1 \le  b < 1+P_1$,}\\
  \left[\gf(\frac{P_1}{1+bP_2})- \gf(\frac{aP_1}{1+P_2})\right]^{+}  &\mbox{if $b<1$,}
  \end{array} \nonumber \right.
\end{align}
and
\begin{align}
 &R_s^{\mathrm{II}}(P_1, P_2) = \notag\\
 &\left\{ \begin{array}{ll}
  \gf(P_1)- \gf(\frac{aP_1}{1+P_2})  &\mbox{if $b \geq 1+P_1$,}\\
  \gf(P_1+bP_2)- \gf(aP_1+P_2) &\mbox{if $\beta_1\leq b < 1+P_1$,}\\
    \gf(P_1)- \gf(aP_1)  &\mbox{if $\beta_2\leq b < \beta_1$,}\\
  \gf(\frac{P_1}{1+bP_2})- \gf(\frac{aP_1}{1+P_2})  &\mbox{if $b<\beta_2$,}
  \end{array} \right.\nonumber
\end{align}
with $\gf(x)\triangleq (1/2)\log_2(1+x)$,
\begin{align*}
\beta_1=\frac{1+P_1}{1+aP_1} \quad \text{and} \quad
\beta_2=\frac{a(1+P_1)}{1+aP_1+(1-a)P_2}.
\end{align*}
\end{theorem}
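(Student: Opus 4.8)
\emph{Proof outline.} The plan is to establish a general achievable secrecy rate for the discrete memoryless WTC-HI by a random-coding argument, and then to evaluate and maximize it for Gaussian inputs; the piecewise form \eqref{rate} will fall out of the optimization.

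\textbf{Coding scheme.} Fix a product input distribution $p(x_1)p(x_2)$. Encoder~1 uses a wiretap codebook of $2^{n(R_s+R')}$ codewords $x_1^n(w,v)$, $w\in\mathcal W$, $v\in\{1,\dots,2^{nR'}\}$, drawn i.i.d.\ $\sim p(x_1)$; to send $w$ it picks $v$ uniformly, so $R'$ is a randomization rate. The helper draws an independent codebook of $2^{nR_2}$ codewords $x_2^n(l)\sim p(x_2)$ and transmits one uniformly at random, carrying no message. Letting $R_2$ be large makes the helper's codeword behave like noise (cooperative jamming), while making it small enough for the intended receiver to decode recovers noise forwarding; keeping $R_2$ free interpolates between the two.

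\textbf{General achievable rate.} Depending on the channel, receiver~1 either treats $x_2^n$ as noise, or decodes $l$ and subtracts $\sqrt b\,x_2^n(l)$ before decoding $(w,v)$, or jointly decodes $(w,v,l)$; standard joint-typicality arguments give vanishing error probability under, respectively, $R_s+R'\le I(X_1;Y_1)$; $R_2\le I(X_2;Y_1)$ and $R_s+R'\le I(X_1;Y_1|X_2)$; or $R_s+R'\le I(X_1;Y_1|X_2)$, $R_2\le I(X_2;Y_1|X_1)$, $R_s+R'+R_2\le I(X_1,X_2;Y_1)$. For the equivocation one chooses $R',R_2$ so that, given $W$, the eavesdropper can reliably decode the pair $(v,l)$ (for which $R'\le I(X_1;Y_2|X_2)$, $R_2\le I(X_2;Y_2|X_1)$, $R'+R_2\le I(X_1,X_2;Y_2)$ suffice) while simultaneously $R'+R_2\ge I(X_1,X_2;Y_2)$, so that $v$ and $l$ absorb all of the eavesdropper's information about the transmitted codeword; the standard chain $I(W;Y_2^n)\le I(X_1^n,X_2^n;Y_2^n)-I(V,L;Y_2^n|W)\le n\epsilon$ then yields $\tfrac1n H(W|Y_2^n)\ge R_s-\epsilon$. (When the helper's codeword is by itself decodable at, or entirely undecodable at, the eavesdropper, $(v,l)$ is replaced by $v$ alone and $I(X_1,X_2;Y_2)$ by $I(X_1;Y_2|X_2)$ or $I(X_1;Y_2)$ respectively.) Eliminating $R'$ via $R'=I(X_1,X_2;Y_2)-R_2$ and maximizing the resulting lower bound on $R_s$ over the decoding mode and over $R_2$ gives a general achievable secrecy rate.

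\textbf{Gaussian evaluation.} Take $X_1\sim\mathcal N(0,P_1)$, $X_2\sim\mathcal N(0,P_2)$ (power back-off is treated separately in this section); then $I(X_1;Y_1|X_2)=\gf(P_1)$, $I(X_1;Y_1)=\gf(\tfrac{P_1}{1+bP_2})$, $I(X_1,X_2;Y_1)=\gf(P_1+bP_2)$, $I(X_2;Y_1|X_1)=\gf(bP_2)$, $I(X_2;Y_1)=\gf(\tfrac{bP_2}{1+P_1})$, and analogously for $Y_2$ with $a$ in place of the direct gain: $I(X_1;Y_2|X_2)=\gf(aP_1)$, $I(X_1;Y_2)=\gf(\tfrac{aP_1}{1+P_2})$, $I(X_1,X_2;Y_2)=\gf(aP_1+P_2)$, $I(X_2;Y_2|X_1)=\gf(P_2)$, $I(X_2;Y_2)=\gf(\tfrac{P_2}{1+aP_1})$. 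The optimal helper rate is $R_2=I(X_2;Y_2|X_1)=\gf(P_2)$ whenever receiver~1 can afford it, which requires $b\ge 1$ (for joint decoding, so that $\gf(P_2)\le\gf(bP_2)$) or $b\ge 1+P_1$ (for decode-then-cancel, so that $\gf(P_2)\le\gf(\tfrac{bP_2}{1+P_1})$); the achievable rate is then $\min\{I(X_1;Y_1|X_2),I(X_1,X_2;Y_1)-R_2\}-I(X_1;Y_2)$, equal to $\gf(P_1)-\gf(\tfrac{aP_1}{1+P_2})$ for $b\ge 1+P_1$ and to $\gf(P_1+bP_2)-\gf(aP_1+P_2)$ for $b$ below $1+P_1$ (with $1$, or $\beta_1$ when $a<1$, as the lower limit of this regime). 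When the helper cannot be exploited one reverts to treating $X_2^n$ as noise, rate $\gf(\tfrac{P_1}{1+bP_2})-\gf(\tfrac{aP_1}{1+P_2})$; and for $a<1$ there is the additional option of a helper codeword that everyone, including the eavesdropper, decodes and cancels, giving the plain wiretap rate $\gf(P_1)-\gf(aP_1)>0$. Taking, for each $(a,b)$, the largest feasible candidate produces exactly the stated thresholds: in $a$, the value $1+P_2$, above which even $\gf(P_1)-\gf(\tfrac{aP_1}{1+P_2})$ is nonpositive so $R_s=0$, and the value $1$, below which the plain wiretap rate is positive; in $b$, the values $1$ and $1+P_1$, and the crossing points $\beta_1$ (where the MAC-type rate overtakes the plain wiretap rate) and $\beta_2$ (where the treat-as-noise rate overtakes it). The $[\cdot]^+$ terms mark the ranges where the relevant leakage exceeds the supportable rate.

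\textbf{Main obstacle.} The mutual-information evaluations are routine; the work lies in (i) the equivocation argument -- making precise the (list-)decodability of $(v,l)$ at the eavesdropper given $W$, so that the dummy index and the helper codeword jointly conceal the message -- and (ii) the Gaussian bookkeeping: determining, as $(a,b)$ ranges over all regimes, which reliability and secrecy constraints are active, which decoding mode and which helper rate are optimal, and verifying that the mode changes occur precisely at $a\in\{1,1+P_2\}$ and $b\in\{1,\beta_2,\beta_1,1+P_1\}$.
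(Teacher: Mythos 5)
Your proposal is correct and follows essentially the same route as the paper: the paper's (very brief) proof uses exactly this two-codebook construction --- a binned wiretap codebook $\mathcal{C}_1(2^{nR_1},2^{nR_s},n)$ with $R_1=R_s+R'$ and a single-bin helper codebook $\mathcal{C}_2(2^{nR_2},1,n)$ --- and obtains the piecewise rate by optimizing the triple $(R_1,R_s,R_2)$ over the decoding modes, which is precisely your optimization over $R'$, $R_2$, and the receiver's decoding strategy. Your Gaussian evaluation and the identification of the thresholds $1$, $1+P_1$, $1+P_2$, $\beta_1$, $\beta_2$ are consistent with the theorem, and you supply more detail than the paper itself does.
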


\begin{proof}
We briefly outline the achievability scheme next and omit the
details of the proof.

In the scheme, we use two independent Gaussian codebooks. Encoder 1
uses stochastic codebook $\mathcal{C}_1(2^{nR_1},2^{nR_s}, n)$,
where $2^{nR_1}$ is the size of the codebook, and $2^{nR_s}$ is the number of
confidential messages can be conveyed ($R_s\leq R$). The $2^{nR_1}$ codewords
in codebook $\mathcal{C}_1$ are randomly grouped into $2^{nR_s}$ bins each with
$M=2^{n(R_1-R_s)}$ codewords. In addition, encoder 2 uses codebook
$\mathcal{C}_2(2^{nR_2},1,n)$, where $2^{nR_2}$ is the codebook size and the
whole codebook forms a single bin. To send message $w \in [1,\dots,2^{nR_s}]$,
encoder 1 randomly selects a codeword from the $w$-th bin to send, and encoder
2 randomly selects a codeword from codebook $\mathcal{C}_2$ to send.

The achievable rate given in Theorem \ref{AchS} is derived by using the above
coding scheme and properly choosing the coding parameter triple
$(R_1,R_s,R_2)$.
\end{proof}


\begin{remark}
It is clear that an interference power $P_2$ can benefit secrecy. In
particular, when $P_2$ is sufficiently large, a positive secrecy rate can be
achieved except the case $$a^{-1} \leq b <1.$$

For comparison, we recall that the secrecy capacity of the
Gaussian wiretap channel (when there is no interferer in the
Gaussian WTC-HI model) is
\begin{equation}\label{gwiretap}
     R_s^{\mathrm{WT}}=\left[\gf(P_1)-\gf(aP_1)\right]^{+}
\end{equation}
and positive secrecy rate can be achieved only when $a<1$.

\end{remark}

\subsection{Power Control}\label{sec:power}

Power control is essential to interference management when
accommodating multi-user communications. As for the Gaussian WTC-HI,
power control also plays a critical role. The interferer may need to
control its power so that it does not introduce too much
interference to the primary transmission, while the transmitter may
want to select its power so that the intended receiver is able to
decode and cancel some now helpful interference before decoding the
primary transmission.

In this section, we consider a power control strategy for maximizing
the secrecy rate given in Theorem \ref{AchS}. We consider the cases with $a
\geq 1$ and $a < 1$, respectively. Due to space limitations, we omit
the proof.

\subsubsection{$a \geq 1$}

\begin{proposition}
When $a \geq 1$, the power control scheme for maximizing the secrecy
rate is given by
\begin{eqnarray}\label{powerc1}
(P_1, P_2) = \left\{ \begin{array}{ll}
  (\min\{\bar{P}_1,P_1^{\ast}\}, \bar{P}_2)  &\mbox{if $b > 1, \bar{P}_2 > a-1 $,}\\
  (\bar{P}_1, \min\{\bar{P}_2, P_2^{\ast}\}) &\mbox{if $b < \frac{1}{a}, \bar{P}_2 > \frac{a-1}{1-ab}$, }\nonumber\\
  (0,0)  &\mbox{otherwise,}
  \end{array} \right.
\end{eqnarray}
where $P_1^{\ast}=b-1$ and
\begin{equation}\label{past}
    P_2^{\ast}=\frac{a-1+\sqrt{(a-1)^2+(b^{-1}-a)[a-b+(1-b)a\bar{P}_1]}}{1-ab}.
\end{equation}
\end{proposition}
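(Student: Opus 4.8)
The plan is to maximize the achievable rate $R_s(P_1,P_2)$ of Theorem~\ref{AchS} directly over the rectangle $0\le P_1\le\bar{P}_1$, $0\le P_2\le\bar{P}_2$. Since $a\ge 1$, Theorem~\ref{AchS} gives $R_s=0$ whenever $P_2\le a-1$ and $R_s=R_s^{\mathrm I}(P_1,P_2)$ on $\{P_2>a-1\}$, so it is enough to maximize $R_s^{\mathrm I}$ over that sub-rectangle and compare the value with $0$. I would organize the argument by the value of $b$, which selects the active branch of $R_s^{\mathrm I}$ — and, importantly, that selection depends only on $P_1$, not on $P_2$. Each branch is $\tfrac12\log_2$ of a ratio of affine functions of the variable being optimized, so the sign of a partial derivative reduces to the sign of a linear (and, in one place, a quadratic) polynomial; the whole proof is then a sequence of such sign checks together with a continuity check at the branch boundary $P_1=b-1$.

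Consider first $b>1$. For every fixed $P_1$ the map $P_2\mapsto R_s^{\mathrm I}$ is nondecreasing: in the branch $b\ge1+P_1$ this is immediate because the subtracted term $\gf\!\big(aP_1/(1+P_2)\big)$ decreases in $P_2$, and in the branch $1\le b<1+P_1$ the derivative of the argument has the sign of $(b-1)+(ab-1)P_1\ge0$. Hence $P_2=\bar{P}_2$ is optimal. Fixing $P_2=\bar{P}_2$ and varying $P_1$, the branch $P_1\le b-1$ is increasing in $P_1$ (the derivative has the sign of $1+\bar{P}_2-a>0$, using $\bar{P}_2>a-1$), while the branch $P_1\ge b-1$ is decreasing (the derivative has the sign of $(1-a)+(1-ab)\bar{P}_2<0$); since the two expressions agree at $P_1=b-1$, the maximizer is $P_1=\min\{\bar{P}_1,P_1^{\ast}\}$ with $P_1^{\ast}=b-1$, and evaluating shows the resulting rate is strictly positive exactly because $\bar{P}_2>a-1$. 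If instead $b>1$ but $\bar{P}_2\le a-1$, then $a\ge1+P_2$ throughout, $R_s\equiv0$, and $(0,0)$ is (trivially) optimal.

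Now suppose $b<1/a$, so $b<1$ and only the branch $R_s^{\mathrm I}=\big[\gf\!\big(P_1/(1+bP_2)\big)-\gf\!\big(aP_1/(1+P_2)\big)\big]^{+}=\big[\tfrac12\log_2\tfrac{1+cP_1}{1+dP_1}\big]^{+}$ is active, with $c=(1+bP_2)^{-1}$ and $d=a(1+P_2)^{-1}$. The sign of $c-d$ equals that of $(1-ab)P_2-(a-1)$, so $R_s^{\mathrm I}$ is strictly increasing in $P_1$ when $P_2>(a-1)/(1-ab)$ and is identically $0$ otherwise; thus if $\bar{P}_2\le(a-1)/(1-ab)$ the best rate is $0$ and $(0,0)$ is optimal, while otherwise $P_1=\bar{P}_1$ and it remains to optimize $R_s^{\mathrm I}(\bar{P}_1,P_2)$ over $P_2$. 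Differentiating, the stationarity condition $a(1+bP_2+\bar{P}_1)(1+bP_2)=b(1+P_2)(1+P_2+a\bar{P}_1)$ clears to the quadratic $b(ab-1)P_2^{2}+2b(a-1)P_2+\big[(a-b)+a(1-b)\bar{P}_1\big]=0$, whose unique positive root is precisely $P_2^{\ast}$ of~\eqref{past} (using $(1-ab)/b=b^{-1}-a$ and discarding the negative root). As $ab<1$, this quadratic has negative leading coefficient and positive constant term, so $R_s^{\mathrm I}(\bar{P}_1,\cdot)$ increases then decreases, peaking at $P_2^{\ast}$; moreover $P_2^{\ast}>(a-1)/(1-ab)$, since otherwise $R_s^{\mathrm I}(\bar{P}_1,\cdot)$ would be decreasing on $\big((a-1)/(1-ab),\infty\big)$ starting from the value $0$ it takes at that threshold, contradicting its strict positivity there. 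Hence the constrained maximizer over $[0,\bar{P}_2]$ is $\min\{\bar{P}_2,P_2^{\ast}\}$, giving $(\bar{P}_1,\min\{\bar{P}_2,P_2^{\ast}\})$. Finally, in the leftover range $1/a\le b\le1$ one has $ab\ge1$, so $c-d\le0$ for all $P_2\ge0$ when $b<1$, and when $b=1$ the inequality $aP_1\ge P_1$ forces $R_s^{\mathrm I}=0$; in either case $(0,0)$ is optimal, which is the ``otherwise'' clause.

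I expect the main obstacle to be the $P_2$-optimization in the case $b<1/a$: one has to carry out the derivative-to-quadratic reduction carefully, confirm that its positive root coincides with the closed form in~\eqref{past}, and then establish both the unimodality of $R_s^{\mathrm I}(\bar{P}_1,\cdot)$ (so that the stationary point is the maximum) and the inequality $P_2^{\ast}>(a-1)/(1-ab)$ (so that the candidate point stays on the part of the domain where $R_s^{\mathrm I}>0$ and the $[\,\cdot\,]^{+}$ operation is inactive). Everything else — the sign checks establishing each monotonicity direction, and continuity of $R_s^{\mathrm I}$ across $P_1=b-1$ — is routine.
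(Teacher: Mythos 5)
Your proof is correct, and since the paper explicitly omits its own proof of this proposition ("due to space limitations"), there is nothing to compare it against; your route --- case analysis on $b$, exploiting that the branch of $R_s^{\mathrm I}$ depends only on $P_1$, monotonicity sign checks, and the derivative-to-quadratic reduction whose unique positive root is $P_2^{\ast}$ --- is clearly the intended one. I verified the key computations (the sign expressions $(b-1)+(ab-1)P_1$, $1+\bar P_2-a$, $(1-a)+(1-ab)\bar P_2$, the quadratic $b(ab-1)P_2^2+2b(a-1)P_2+(a-b)+a(1-b)\bar P_1$ and its positive root matching \eqref{past}, and the unimodality/threshold arguments) and they all check out.
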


According to Proposition~1, when $a>1$,  a positive secrecy rate can
be achieved when $b>1$ or $b \leq a^{-1}$ if the interferer's power
$\bar{P}_2$ is large enough. When $b>1$, the interferer uses its
full power $\bar{P}_2$ and the transmitter selects its power to
guarantee that the intended receiver can first decode the
interference (and cancel it). When $b<a^{-1}$, the intended receiver
treats the interference as noise. In this case, the transmitter can
use its full power $\bar{P}_1$ and the interferer controls its power
(below $P_2^{\ast}$) to avoid excessive interference.

\subsubsection{$a < 1$}

\begin{proposition}
When $a < 1$, the power control scheme for maximizing the secrecy
rate is given by
\begin{eqnarray}\label{powerc2}
 \lefteqn{(P_1, P_2) = } \nonumber\\
  &\left\{ \begin{array}{ll}
  (\bar{P}_1, \bar{P}_2)  &\mbox{if $b \geq 1, \bar{P}_1 < b-1$,}\\
  (\bar{P}_1, \bar{P}_2) &\mbox{if $b \geq \frac{1}{a}, \bar{P}_1 \geq b-1, \bar{P}_2 < \frac{1-a}{ab-1}$, }\\
  (P_1^{\ast}, \bar{P}_2) &\mbox{if $b \geq \frac{1}{a}, \bar{P}_1 \geq b-1, \bar{P}_2 \geq \frac{1-a}{ab-1}$, }\\
  (\bar{P}_1,\bar{P}_2) &\mbox{if $1 \leq b < \frac{1}{a}, b-1 \leq \bar{P}_1 < \frac{b-1}{1-ab}$, }\\
  (\bar{P}_1,\min\{\bar{P}_2, P_2^{\ast}\}) &\mbox{if $b < 1, \bar{P}_1 \geq \frac{b-a}{a(1-b)}$, }\\
  (\bar{P}_1,0)  &\mbox{otherwise,}
  \end{array} \right. \nonumber
\end{eqnarray}
where $P_1^{\ast}=b-1$ and $P_2^{\ast}$ is given by (\ref{past}).
\end{proposition}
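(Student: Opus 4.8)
The plan is to maximize the achievable rate, which for $a<1$ equals $R_s^{\mathrm{II}}(P_1,P_2)$, over the rectangle $\{\,0\le P_1\le\bar{P}_1,\ 0\le P_2\le\bar{P}_2\,\}$, by a case analysis on the fixed gain $b$ that mirrors the cases in the statement. Two facts drive everything. First, $R_s^{\mathrm{II}}$ is \emph{continuous} across the three internal boundaries $b=1+P_1$, $b=\beta_1$, and $b=\beta_2$: each threshold is, by construction, the value of $b$ at which the two neighbouring expressions agree, as one checks by substituting the boundary relation (for instance, on $b=\beta_1$ one uses $b(1+aP_1)=1+P_1$ to collapse $\gf(P_1+bP_2)-\gf(aP_1+P_2)$ to $\gf(P_1)-\gf(aP_1)$). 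Second, inside each of the four branches of $R_s^{\mathrm{II}}$, numbered $1$ to $4$ from top to bottom, the function is monotone in $P_1$ and in $P_2$ \emph{separately}, the direction being obtained from elementary sign tests on $\partial_{P_1}$ and $\partial_{P_2}$; together with continuity, a direction common to all branches extends to the whole feasible set, which reduces the problem to pushing each variable to an endpoint or a branch boundary --- except on the single genuinely two-sided branch, where a one-dimensional stationary-point computation is required.

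Consider first $b\ge1$. As $P_1$ grows from $0$ with $P_2$ held fixed, the point leaves branch~1 (in force while $P_1<b-1$) for branch~2, and, when $b<1/a$, enters branch~3 once $P_1\ge(b-1)/(1-ab)$; indeed $b=\beta_1$ precisely at $P_1=(b-1)/(1-ab)$, while $b\ge\beta_2$ holds for all $P_1,P_2$ whenever $b\ge1$. On branches~1 and~3, $R_s^{\mathrm{II}}$ is strictly increasing in $P_1$ (and on branch~3 it does not depend on $P_2$); on branch~2 one finds $\partial_{P_2}R_s^{\mathrm{II}}\ge0$ always, and $\partial_{P_1}R_s^{\mathrm{II}}>0$ except when $ab>1$ and $P_2\ge(1-a)/(ab-1)$. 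Hence, if $\bar{P}_1<b-1$, or if $b<1/a$ with $b-1\le\bar{P}_1<(b-1)/(1-ab)$, the rate increases in both variables on the rectangle and the optimum is $(\bar{P}_1,\bar{P}_2)$. If $b\ge1/a$ and $\bar{P}_1\ge b-1$, then for $\bar{P}_2<(1-a)/(ab-1)$ the rate still increases in both, giving $(\bar{P}_1,\bar{P}_2)$; but for $\bar{P}_2\ge(1-a)/(ab-1)$ the map $P_1\mapsto R_s^{\mathrm{II}}(P_1,\bar{P}_2)$ increases on $[0,b-1]$ and decreases on $[b-1,\bar{P}_1]$, so $P_1^{\ast}=b-1$ with $P_2=\bar{P}_2$ is optimal (here $\partial_{P_2}R_s^{\mathrm{II}}\ge0$ is used to fix $P_2=\bar{P}_2$ first). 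Finally, if $1\le b<1/a$ and $\bar{P}_1\ge(b-1)/(1-ab)$, the rate keeps increasing in $P_1$ up to $\bar{P}_1$, where it lies on branch~3 and no longer sees $P_2$, so the optimum is $(\bar{P}_1,0)$.

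Now take $b<1$. Here $1+P_1>1>b$ and $\beta_1>1>b$ for every $P_1$, so only branches~3 and~4 occur; then $ab<1$, and one checks $\partial_{P_1}R_s^{\mathrm{II}}>0$ on both, so $P_1=\bar{P}_1$ is optimal and it remains to maximize $P_2\mapsto R_s^{\mathrm{II}}(\bar{P}_1,P_2)$. For $P_2$ large this equals the branch~3 constant $\gf(\bar{P}_1)-\gf(a\bar{P}_1)$; for $P_2$ small it equals the branch~4 expression $\gf\bigl(\tfrac{\bar{P}_1}{1+bP_2}\bigr)-\gf\bigl(\tfrac{a\bar{P}_1}{1+P_2}\bigr)$, whose derivative, set to zero, clears to $(1-ab)P_2^{2}+2(1-a)P_2-b^{-1}\bigl[a-b+(1-b)a\bar{P}_1\bigr]=0$, whose unique nonnegative root is exactly $P_2^{\ast}$ in (\ref{past}). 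Since $1-ab>0$, this derivative is positive and then negative, so $P_2^{\ast}$ maximizes the branch; and the derivative at $P_2=0$ has the sign of $a(1+\bar{P}_1)-b(1+a\bar{P}_1)$, which is positive iff $\bar{P}_1>\tfrac{b-a}{a(1-b)}$ (and is automatic when $b\le a$). Thus for $\bar{P}_1\ge\tfrac{b-a}{a(1-b)}$ the optimum is $(\bar{P}_1,\min\{\bar{P}_2,P_2^{\ast}\})$ --- using $\bar{P}_2$ when it does not exceed $P_2^{\ast}$ and $P_2^{\ast}$ otherwise, where one also verifies $P_2^{\ast}$ lies in the closure of branch~4 so that the branch~3 plateau is never better --- while for $\bar{P}_1<\tfrac{b-a}{a(1-b)}$ the point $(\bar{P}_1,0)$ is already on branch~3 (equivalently, the rate is nonincreasing in $P_2$ from the start), giving $(\bar{P}_1,0)$.

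The individual derivative signs and boundary identities are mechanical; the real work --- and the main obstacle --- is the bookkeeping that each branch boundary $1+P_1$, $\beta_1$, $\beta_2$ is itself a function of the powers, so one must track which branch a point occupies as each variable is swept (noting, e.g., that $\beta_2$ is increasing in $P_1$ and decreasing in $P_2$), confirm that $P_2^{\ast}$ actually falls in the region where the branch~4 formula applies (and check that when it does not, the $\min$ in the statement still returns the correct, branch~3, value), and verify that the residual parameter ranges absorbed into the ``otherwise'' clause are exactly those in which either the intended receiver decodes and cancels the interference --- so one is on branch~3, $P_2$ is immaterial, and $0$ is as good as anything --- or adding interference is counterproductive from the outset, so again $P_2=0$.
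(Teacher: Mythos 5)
The paper offers no proof to compare against: it states this proposition and immediately says the proof is omitted for space reasons. So your argument has to stand on its own, and it does. I checked the load-bearing computations: on the branch $\gf(P_1+bP_2)-\gf(aP_1+P_2)$ the sign of $\partial_{P_2}$ is that of $(b-1)-P_1(1-ab)$, which is exactly the condition $b\ge\beta_1$ defining that branch, so your claim that it is nonnegative there is right; the sign of $\partial_{P_1}$ there and on the fourth branch is that of $1-a+P_2(1-ab)$, matching your threshold $\bar{P}_2\ge(1-a)/(ab-1)$ when $ab>1$; and clearing the derivative of $\gf\bigl(\tfrac{\bar{P}_1}{1+bP_2}\bigr)-\gf\bigl(\tfrac{a\bar{P}_1}{1+P_2}\bigr)$ does give $(1-ab)P_2^2+2(1-a)P_2-b^{-1}[a-b+(1-b)a\bar{P}_1]=0$, whose positive root is precisely $P_2^{\ast}$ in (\ref{past}). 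One step you flag as needing verification -- that $P_2^{\ast}$ lies inside the fourth branch -- follows immediately from an observation worth making explicit: the fourth-branch expression equals the third-branch constant $\gf(\bar{P}_1)-\gf(a\bar{P}_1)$ both at $P_2=0$ and at the crossover $\beta_2=b$, so its unique interior stationary point sits strictly between them and strictly exceeds the plateau. With that, the case bookkeeping reproduces all six rows, including the two subcases absorbed into ``otherwise.'' This direct monotonicity-plus-continuity optimization over the piecewise rate is surely what the authors had in mind; your writeup is a valid reconstruction of the missing proof, though the closing paragraph defers a few checks that you should fold into the two case paragraphs rather than leave as promissory notes.
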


When $a<1$, positive secrecy rate is always feasible. Here, we
consider the cases when the interferer does not help. First, in the
case when $1 \leq b < a^{-1}$, the transmitter needs to hold its
power if it wants to let the receiver decode some interference.
However, if the transmitter has a large power
($\bar{P}_1>\frac{b-1}{1-ab}$), it would better to use all its power
and request the interferer be silent. In the case when $a< b < 1$,
the receiver treats the interference as noise. If the transmitter
does not have enough power ($\bar{P}_1 <\frac{b-a}{a(1-b)}$), the
interference will hurt the intended receiver more than the
eavesdropper.

\subsection{Power-unconstrained Secrecy Rate}\label{powerun}

A fundamental parameter of wiretap-channel-based wireless secrecy
systems is the secrecy rate when the transmitter has unconstrained
power, which is only related to the channel conditions. For example,
the power-unconstrained secrecy capacity for the Gaussian wiretap
channel is given by
\begin{equation}\label{limit1}
    \lim_{\bar{P}_1 \rightarrow \infty}\left[\gf(\bar{P}_1)-\gf(a\bar{P}_1)\right]^{+}=\frac{1}{2}\left[\log_{2}\frac{1}{a}\right]^{+}.
\end{equation}
After some limiting analysis, we have the following result for the
Gaussian WTC-HI model.

\begin{theorem}

When $a \geq 1$, the achievable power unconstrained secrecy rate for
the Gaussian WTC-HI is
\begin{eqnarray}\label{limit2}
\lim_{\bar{P}_1,\bar{P}_2 \rightarrow \infty}R_s = \left\{
\begin{array}{ll}
  \frac{1}{2}\log_{2}b  &\mbox{if $b > 1$,}\\
  \frac{1}{2}\log_{2}\frac{1}{ab} &\mbox{if $b < \frac{1}{a}$,}\\
  0  &\mbox{otherwise.}
  \end{array} \right.
\end{eqnarray}
When $a<1$, the achievable power unconstrained secrecy rate for
the Gaussian WTC-HI is
\begin{eqnarray}\label{limit3}
\lim_{\bar{P}_1,\bar{P}_2 \rightarrow \infty}R_s = \left\{
\begin{array}{ll}
  \frac{1}{2}\log_{2}b  &\mbox{if $b > \frac{1}{a}$,}\\
  \frac{1}{2}\log_{2}\frac{1}{ab} &\mbox{if $b < 1$,}\\
  \frac{1}{2}\log_{2}\frac{1}{a}  &\mbox{otherwise.}
  \end{array} \right.
\end{eqnarray}
\end{theorem}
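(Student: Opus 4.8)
The plan is to take the power-optimized achievable rate of Theorem~\ref{AchS}---that is, $R_s$ with $(P_1,P_2)$ set by the power-control rules of Propositions~1 and~2---and pass to the limit $\bar P_1,\bar P_2\to\infty$. The first step is to notice that, for $\bar P_1$ and $\bar P_2$ large, most branches of Propositions~1 and~2 are vacuous: any branch whose hypothesis forces $\bar P_1$ or $\bar P_2$ to lie below a fixed threshold cannot persist. For $a\geq 1$ this leaves the allocations $(P_1,P_2)=(b-1,\bar P_2)$ when $b>1$, $(P_1,P_2)=(\bar P_1,\min\{\bar P_2,P_2^{\ast}\})$ when $b<1/a$, and $(P_1,P_2)=(0,0)$ when $1/a\leq b\leq 1$; for $a<1$ it leaves $(b-1,\bar P_2)$ when $b>1/a$, $(\bar P_1,\min\{\bar P_2,P_2^{\ast}\})$ when $b<1$, and $(\bar P_1,0)$ when $1\leq b\leq 1/a$.

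The second step is, for each surviving allocation, to identify the active branch of the rate formula and take the limit; three elementary patterns cover all cases. In the first pattern $P_1=b-1$ is held fixed and $P_2=\bar P_2\to\infty$: since $1+P_1=b$ the condition $b\geq 1+P_1$ holds, the rate equals $\gf(P_1)-\gf\!\left(aP_1/(1+P_2)\right)$, and the second term vanishes, so $R_s\to\gf(b-1)=\frac{1}{2}\log_2 b$. In the second pattern the interferer is silent ($P_2=0$), so the system is exactly the Gaussian wiretap channel and $R_s$ reduces to $\left[\gf(\bar P_1)-\gf(a\bar P_1)\right]^{+}$ as in (\ref{gwiretap}), which by (\ref{limit1}) tends to $\frac{1}{2}\left[\log_2(1/a)\right]^{+}=\frac{1}{2}\log_2(1/a)$ (this pattern occurs only for $a<1$). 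In the third pattern $(P_1,P_2)=(\bar P_1,\min\{\bar P_2,P_2^{\ast}\})$: from (\ref{past}) one checks that, under the sign conditions that make $P_2^{\ast}$ real and positive in these cases, $P_2^{\ast}=\Theta(\sqrt{\bar P_1})$; hence $P_2\leq P_2^{\ast}=o(P_1)$ along every path with $\bar P_1,\bar P_2\to\infty$, which both makes the joint limit well defined and drives the thresholds to their limits (e.g. $\beta_2\to 1$), so that for large powers the ``treat-interference-as-noise'' branch $\gf\!\left(P_1/(1+bP_2)\right)-\gf\!\left(aP_1/(1+P_2)\right)$ is the one in force. Writing this difference as $\frac{1}{2}\log_2\frac{(1+P_1+bP_2)(1+P_2)}{(1+bP_2)(1+P_2+aP_1)}$ and using $P_2\to\infty$ together with $P_2/P_1\to 0$, the ratio tends to $1/(ab)$, so $R_s\to\frac{1}{2}\log_2(1/(ab))$, which is positive since $ab<1$ whenever this branch arises.

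Collecting the cases then yields the statement: for $a\geq 1$, the first pattern gives $\frac{1}{2}\log_2 b$ for $b>1$, the third pattern gives $\frac{1}{2}\log_2(1/(ab))$ for $b<1/a$, and $R_s=0$ otherwise, which is (\ref{limit2}); for $a<1$, the first pattern gives $\frac{1}{2}\log_2 b$ for $b>1/a$, the third pattern gives $\frac{1}{2}\log_2(1/(ab))$ for $b<1$, and the second pattern gives $\frac{1}{2}\log_2(1/a)$ for $1\leq b\leq 1/a$, which is (\ref{limit3}).

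The delicate point is the third pattern. One has to verify that the power-control rule really keeps $P_2$ of order $\sqrt{\bar P_1}$---so that the double limit over $(\bar P_1,\bar P_2)$ exists and is independent of how the two powers grow relative to each other---and that the boundaries $\beta_1,\beta_2$ in the $R_s^{\mathrm{II}}$ formula converge as claimed, so that the intended branch is active for all sufficiently large powers. Everything else---matching allocations to branches and evaluating logarithmic limits---is routine.
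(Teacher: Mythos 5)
Your proposal is correct, and it is essentially the ``limiting analysis'' the paper alludes to but omits: combine the power-control allocations of Propositions~1 and~2 (discarding the branches whose hypotheses cap $\bar P_1$ or $\bar P_2$ at fixed thresholds) with the corresponding branches of Theorem~\ref{AchS} and pass to the limit. Your treatment of the only nontrivial case --- verifying that $P_2^{\ast}=\Theta(\sqrt{\bar P_1})$ so that $P_2\to\infty$ while $P_2/P_1\to 0$, forcing the treat-interference-as-noise branch and yielding $\tfrac{1}{2}\log_2\frac{1}{ab}$ independently of how the two powers grow --- is exactly the point that needs care, and you handle it correctly.
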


When the interference is weak ($b<a^{-1}$ if $a \geq 1$, or $b<1$ if
$a<1$), the interference introduces a gain of $(1/2)\log_2(1/b)$.
When the interference is strong enough ($b>1$ if $a \geq 1$, or
$b>a^{-1}$ if $a<1$), the power-unconstrained secrecy rate is
$(1/2)\log_{2}b$. Note that $(1/2)\log_{2}b$ is the
power-unconstrained secrecy rate if the confidential message is sent
from the interferer to the intended receiver in the presence of the
eavesdropper. This is particularly interesting because we do not
assume that there is a transmitter-interferer channel (which would
enable the interferer to relay the transmission).

\begin{figure}
  \centering
  \includegraphics[width=3.4in]{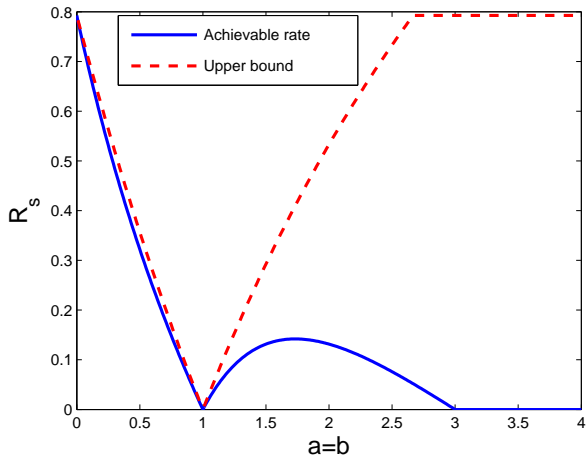}\\
  \caption{Achievable secrecy rate and upper bound versus channel gain $a=b$ for a symmetric channel.}\label{symRs}
  \vspace{-0.1in}
\end{figure}

\section{Sato-Type Upper Bound}\label{sec:sato}

In this section, we first describe a computable Sato-type upper bound for a
general WTC-HI and next evaluate the upper bound for the Gaussian WTC-HI.

It should be noted that the secrecy capacity of the WTC-HI depends
only on the marginal distributions $P_{Y_1|X_1,X_2}$ and
$P_{Y_2|X_1,X_2}$, and not on any further structure of the joint
distribution $P_{Y_1,Y_2|X_1,X_2}$. In fact, the secrecy capacity
is the same for any channel described by
$P_{\tilde{Y}_1,\tilde{Y}_2|X_1,X_2}$ whose marginal distributions
satisfy
\begin{align}
P_{\tilde{Y}_j|X_1,X_2}(y_j|x_1,x_2) &= P_{Y_j|X_1,X_2}(y_j|x_1,x_2)\label{MD-1}
\end{align}
for $j=1,2$ and all $y_1,y_2$ and $x_1,x_2$.

\begin{theorem}\label{Souterbound}
Let $R_u$ denote a Sato-type upper bound
\begin{equation}\label{sato}
    R_u \triangleq \min_{P_{\tilde{Y}_1, \tilde{Y}_2|
    X_1,X_2}}\max_{P_{X_1},P_{X_2}}I(X_1,X_2;\tilde{Y}_1|\tilde{Y}_2).
\end{equation}
Then, the secrecy capacity of WTC-HI satisfies
\begin{equation}\label{eq:up}
    R_s \leq \min \left[R_u, \; \max_{P_{X_1},P_{X_2}}I(X_1;Y_1|X_2)\right].
\end{equation}

\end{theorem}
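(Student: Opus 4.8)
The plan is to establish the two bounds in \eqref{eq:up} separately. The second term, $R_s \leq \max_{P_{X_1},P_{X_2}} I(X_1;Y_1|X_2)$, should follow from a standard converse argument: starting from Fano's inequality and the reliability constraint \eqref{ach_def1}, write $nR_s \leq H(W) \leq I(W;Y_1^n) + n\epsilon_n$, then upper bound $I(W;Y_1^n) \leq I(X_1^n;Y_1^n|X_2^n)$ by using the independence of the interferer's input $X_2^n$ from the message $W$ (so conditioning on $X_2^n$ does not decrease the mutual information) together with the Markov chain $W \to X_1^n \to Y_1^n$ conditioned on $X_2^n$. Single-letterizing via the chain rule and the memoryless property, and introducing a time-sharing random variable, gives the single-letter bound $I(X_1;Y_1|X_2)$ maximized over product input distributions $P_{X_1}P_{X_2}$ (the inputs are independent because the encoders operate without common randomness). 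The only subtlety is bookkeeping the equivocation term, but since this bound ignores secrecy it drops out.

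For the Sato-type term $R_u$, the key observation is the one already flagged in the paragraph preceding the theorem: the secrecy capacity depends only on the marginals $P_{Y_1|X_1,X_2}$ and $P_{Y_2|X_1,X_2}$, not on the joint law $P_{Y_1,Y_2|X_1,X_2}$. So I would fix an \emph{arbitrary} auxiliary channel $P_{\tilde Y_1,\tilde Y_2|X_1,X_2}$ satisfying the marginal-matching conditions \eqref{MD-1}, argue that any code achieving secrecy rate $R_s$ on the original channel achieves the same secrecy rate on the modified channel, and then derive a converse bound on the modified channel in which the eavesdropper is \emph{given} $\tilde Y_2^n$ as side information at the legitimate receiver. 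Concretely: $nR_s \leq I(W;\tilde Y_1^n) - I(W;\tilde Y_2^n) + n\epsilon_n \leq I(W;\tilde Y_1^n | \tilde Y_2^n) + n\epsilon_n$, where the second step needs care — it is not literally a data-processing step but follows from $I(W;\tilde Y_1^n) - I(W;\tilde Y_2^n) \leq I(W;\tilde Y_1^n,\tilde Y_2^n) - I(W;\tilde Y_2^n) = I(W;\tilde Y_1^n|\tilde Y_2^n)$. Then bound $I(W;\tilde Y_1^n|\tilde Y_2^n) \leq I(X_1^n,X_2^n;\tilde Y_1^n|\tilde Y_2^n)$ and single-letterize. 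Since this holds for every valid auxiliary channel, we may take the minimum over $P_{\tilde Y_1,\tilde Y_2|X_1,X_2}$, and since the bound must hold for the code's induced input distribution we take the maximum over $P_{X_1},P_{X_2}$, yielding $R_s \leq R_u$.

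The main obstacle is the single-letterization of $I(X_1^n,X_2^n;\tilde Y_1^n | \tilde Y_2^n)$, and in particular checking that it collapses to a single-letter expression with \emph{independent} inputs $P_{X_1}P_{X_2}$ rather than an arbitrary joint $P_{X_1,X_2}$. This requires exploiting that the chosen auxiliary channel is memoryless (one is free to pick the product extension $P_{\tilde Y_1,\tilde Y_2|X_1,X_2}$ applied i.i.d. across the block, which still satisfies \eqref{MD-1} per-letter) so that conditioned on $X_{1,k},X_{2,k}$ the pair $(\tilde Y_{1,k},\tilde Y_{2,k})$ is independent of everything else; then a chain-rule expansion plus a time-sharing variable $Q$ uniform on $\{1,\dots,n\}$ gives $I(X_1,X_2;\tilde Y_1|\tilde Y_2,Q)$, which one absorbs into $Q$-free form by noting $Q$ can be made part of neither input without loss since the max over $P_{X_1},P_{X_2}$ dominates. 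Keeping the inputs independent throughout is automatic because encoders $1$ and $2$ share no randomness, so the induced $P_{X_1^n,X_2^n}$ is a product measure and stays so letter-by-letter. Finally, combining the two parts, $R_s$ is at most the minimum of the two bounds, which is \eqref{eq:up}.
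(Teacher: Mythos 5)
Your proposal follows essentially the same route as the paper's proof: Fano's inequality plus the equivocation constraint give $nR_s \leq I(W;\tilde{Y}_1^n|\tilde{Y}_2^n) + n(\epsilon+\delta)$, data processing lifts this to $I(X_1^n,X_2^n;\tilde{Y}_1^n|\tilde{Y}_2^n)$, and the marginal-matching genie argument justifies minimizing over auxiliary channels before single-letterizing. The paper's appendix in fact proves only the $R_u$ part of \eqref{eq:up}, so your explicit treatment of the cut-set term $\max I(X_1;Y_1|X_2)$ and of the memoryless product extension needed for the single-letterization step supplies details the paper leaves implicit.
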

\begin{proof}
The proof can be found in the Appendix.
\end{proof}

The upper bound assumes that a genie gives the eavesdropper's signal
$\tilde{Y}_2$ to the intended receiver as the side information for decoding
message $W$. Since the eavesdropper's signal $\tilde{Y}_2$ is always a degraded
version of the combined signal $(\tilde{Y}_1,\tilde{Y}_2)$, the wiretap channel
result \cite{Wyner:BSTJ:75} can therefore be used.

Now we consider the evaluate of (\ref{sato}) for the Gaussian WTC-HI. $I(X_1,X_2;\tilde{Y}_1|\tilde{Y}_2)$ is a
function of the transmit power $P_1$, $P_2$ and the noise
covariance $\rho$. Hence, it is denoted as $f(P_1,P_2,\rho)$ and is shown to be
\begin{align} \label{GSato}
&f(P_1,P_2,\rho)=\frac{1}{2} \times \nonumber\\
&\log_2{\frac{(1+P_1+bP_2)(1+aP_1+P_2)-(\rho+\sqrt{a}P_1+\sqrt{b}P_2)^2}{(1-\rho^2)(1+aP_1+P_2)}}.
\end{align}

For any given $\rho$, $f(P_1,P_2,\rho)$ is an increasing function of
both $P_1$ and $P_2$. For any given $P_1$ and $P_2$,
$f(P_1,P_2,\rho)$ is a convex function of $\rho$ and the minimum
occurs when $\rho$ is chosen to be $\rho^{\star}$, which is given by
\begin{equation*}
    \rho^{\star}(P_1,P_2)=\frac{(1+a)P_1+(1+b)P_2+(\sqrt{ab}-1)^2P_1P_2-\sqrt{\Delta}}{2(\sqrt{a}P_1+\sqrt{b}P_2)}
\end{equation*}
where
\begin{align*}
    \Delta&=[(\sqrt{a}-1)^2P_1+(\sqrt{b}-1)^2P_2+(\sqrt{ab}-1)^2P_1P_2]\nonumber\\
          &\times[(\sqrt{a}+1)^2P_1+(\sqrt{b}+1)^2P_2+(\sqrt{ab}-1)^2P_1P_2].
\end{align*}
Therefore, the Sato-type upper bound can be calculated as
\begin{equation*}
    R_u=\min_{\rho}\max_{(P_1,P_2)}f(P_1,P_2,\rho)=f(\bar{P}_1,\bar{P}_2,\rho^*(\bar{P}_1,\bar{P}_2)).
\end{equation*}

\section{Numerical Examples}\label{sec:numerical}

Fig. \ref{symRs} shows the achievable rate and the modified Sato-type upper
bound for a symmetric Gaussian WTC-HI channel ($a=b$). In this example, we
assume that $\bar{P}_1=\bar{P}_2=2$, and $a$ varies from 0 to 4. The achievable
rate $R_s$ first decreases with $a$ when $a<1$; when $1<a \leq 1.73$, $R_s$
increases with $a$ because the intended receiver now can decode and cancel the
interference, while the eavesdropper can only treat the interference as noise;
when $a>1.73$, $R_s$ decreases again with $a$ because the interference does not
affect the eavesdropper much when $a$ is large. The upper bound is good for the
weak interference case when $a \leq 1$. However, when $a>1$ and $a$ is large,
the upper bound is quite loose because too much information is given to the
intended receiver in the genie-aided bound.

In Fig. \ref{Rsb} and Fig. \ref{Rsa}, we present numerical results to show the
achievable rate and the modified Sato-type upper bound for the general
parameter settings of $a$ and $b$, where we again assume that
$\bar{P}_1=\bar{P}_2=2$. In Fig. \ref{Rsb}, we show the secrecy rate versus $b$
when $a$ is fixed to be $0.6$ and $1.2$, respectively. In Fig. \ref{Rsa}, we
show the secrecy rate versus $a$ when $b$ is fixed to be $0.2$ and $1.2$,
respectively. Our numerical results show that the Sato-type upper bound is good
when $ab \leq 1$ (which is consistent with $a \leq 1$ for the symmetric case).
Note that $ab = 1$ corresponds to the degraded case, for which the Sato-type
upper bound is always tight.

\begin{figure}
  \centering
  \includegraphics[width=3.4in]{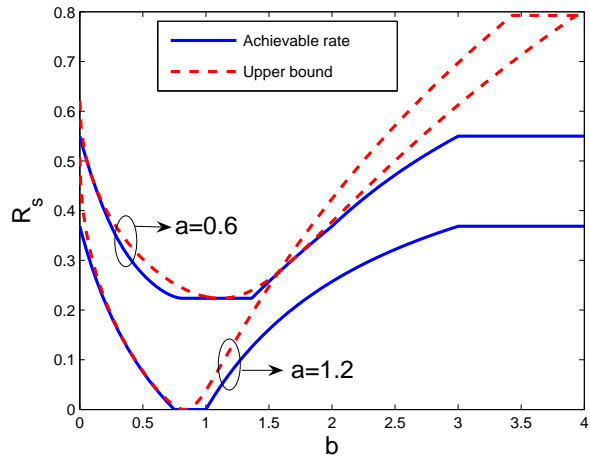}\\
  \caption{Achievable secrecy rate and upper bound versus $b$.}\label{Rsb}
 \vspace{-0.1in}
\end{figure}

\begin{figure}
  \centering
  \includegraphics[width=3.4in]{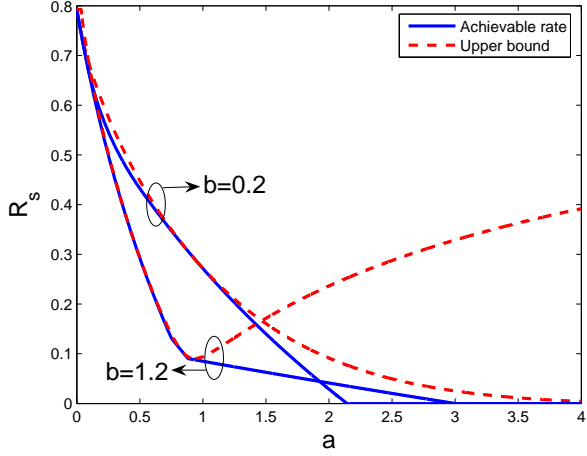}\\
  \caption{Achievable secrecy rate and upper bound versus $a$.}\label{Rsa}
  \vspace{-0.1in}
\end{figure}

\section{Conclusions}\label{sec:conclusions}

In this paper, we have considered the use of the superposition property of the
wireless medium to alleviate the eavesdropping issues caused by the broadcast
nature of the medium. We have studied a Gaussian wiretap channel with a helping
interferer, in which the interferer assists the secret communication by
injecting independent interference. We have given an achievable secrecy rate
and a Sato-type upper bound on the secrecy capacity. The results show that
interference, which seldom offers any advantage for (Gaussian) problems not
involving secrecy, can benefit secret wireless communication.


\appendix
\section{The Sato-type Outer Bound}\label{outerbound}

\begin{proof}[Proof of Theorem \ref{Souterbound}]

The secrecy requirement implies that
\begin{equation}\label{Asecrecy}
    nR_s=H(W) \leq H(W|Y_2^n)+n\epsilon,
\end{equation}
and Fano's inequality implies that
\begin{equation}\label{AFano}
    H(W|Y_1^n) \leq n\epsilon R_1+ h(\epsilon) \triangleq n\delta.
\end{equation}
Based on (\ref{Asecrecy}) and (\ref{AFano}), we have
\begin{align}
    nR_s &\leq H(W|Y_2^n)+n\epsilon \nonumber\\
         &\leq H(W|Y_2^n) - H(W|Y_1^n) + n(\epsilon + \delta) \nonumber \\
         &\leq H(W|Y_2^n) - H(W|Y_1^n, Y_2^n) + n(\epsilon + \delta)\label{A1-1}\\
         &= I(W;Y_1^n|Y_2^n)+ n(\epsilon + \delta)\nonumber\\
         &\leq I(X_1^n, X_2^n;Y_1^n|Y_2^n)+n(\epsilon + \delta) \label{A1-2}\\
         &\leq \sum_{i=1}^{n}I(X_{1,i}, X_{2,i};Y_{1,i}|Y_{2,i})+n(\epsilon + \delta), \label{A1-3}
\end{align}
where (\ref{A1-1}) is due to the fact that conditioning reduces entropy, and
(\ref{A1-2}) follows since $W \rightarrow (X_1^n, X_2^n) \rightarrow (Y_1^n,
Y_2^n)$ forms a Markov chain. Since the secrecy capacity of the WTC-HI depends
only on marginal distributions, we can replace $(Y_1,Y_2)$ with
$(\tilde{Y}_1,\tilde{Y}_2)$ defined by (\ref{MD-1}) and obtain (\ref{sato}).

Now we evaluate (\ref{sato}) for the Gaussian WTC-HI. We let
\begin{align}\label{VVMD}
    \tilde{Y}_1 = X_1 + \sqrt{b} X_2 + \tilde{Z}_1
 \mbox{~and~} \tilde{Y}_2 = \sqrt{a} X_1 + X_2 +
\tilde{Z}_2,
\end{align}
where $\tilde{Z}_1$ and $\tilde{Z}_2$ are arbitrarily correlated
Gaussian random variables with zero-means and unit variances. Let
$\rho$ denote the covariance between $\tilde{Z}_1$ and
$\tilde{Z}_2$, i.e.,
\begin{equation*}
    \mathrm{Cov}(\tilde{Z}_1,\tilde{Z}_2) = \rho.
\end{equation*}
Now, $I(X_1,X_2;\tilde{Y}_1|\tilde{Y}_2)$ can be evaluated as
\begin{align}
    &I(X_1,X_2;\tilde{Y}_1|\tilde{Y}_2) \nonumber\\
         &= I(X_1,X_2;\tilde{Y}_1,\tilde{Y}_2)-I(X_1,X_2;\tilde{Y}_2) \nonumber \\
         &= [H(\tilde{Y}_1,\tilde{Y}_2)-H(\tilde{Y}_1,\tilde{Y}_2|X_1,X_2)]-[h(\tilde{Y}_2)-h(\tilde{Y}_2|X_1,X_2)]\nonumber\\
         &= h(\tilde{Y}_1|\tilde{Y}_2)-h(\tilde{Z}_1|\tilde{Z}_2)\nonumber\\
         &= h(\tilde{Y}_1|\tilde{Y}_2)- \frac{1}{2}\log_2[2\pi e(1-\rho^2)].
\end{align}
By letting
\begin{equation}\label{tc}
    t=\frac{E[\tilde{Y}_1\tilde{Y}_2]}{E[\tilde{Y}_2^2]},
\end{equation}
we have
\begin{align}
    h(\tilde{Y}_1|\tilde{Y}_2) &= h(\tilde{Y}_1- t\tilde{Y}_2|\tilde{Y}_2) \nonumber \\
         &\leq h(\tilde{Y}_1- t\tilde{Y}_2) \label{A2-1}\\
         &\leq \frac{1}{2}\log_2[2\pi e \mathrm{Var}(\tilde{Y}_1-
         t\tilde{Y}_2)]\label{A2-2},
\end{align}
where (\ref{A2-2}) follows from the maximum-entropy theorem and both
equalities in (\ref{A2-1}) and (\ref{A2-2}) hold true when
$(X_1,X_2)$ are Gaussian.

Furthermore, we have
\begin{align*}
\mathrm{Var}(\tilde{Y}_1-t\tilde{Y}_2)=
1+P_1+bP_2-\frac{(\rho+\sqrt{a}P_1+\sqrt{b}P_2)^2}{1+aP_1+P_2}.
\end{align*}
Hence, $I(X_1,X_2;\tilde{Y}_1|\tilde{Y}_2)$ can be evaluated by
(\ref{GSato}).
\end{proof}

\bibliographystyle{IEEEtran}
\bibliography{MacFC}

\begin{thebibliography}{10}
\providecommand{\url}[1]{#1}
\csname url@samestyle\endcsname
\providecommand{\newblock}{\relax}
\providecommand{\bibinfo}[2]{#2}
\providecommand{\BIBentrySTDinterwordspacing}{\spaceskip=0pt\relax}
\providecommand{\BIBentryALTinterwordstretchfactor}{4}
\providecommand{\BIBentryALTinterwordspacing}{\spaceskip=\fontdimen2\font plus
\BIBentryALTinterwordstretchfactor\fontdimen3\font minus
  \fontdimen4\font\relax}
\providecommand{\BIBforeignlanguage}[2]{{%
\expandafter\ifx\csname l@#1\endcsname\relax
\typeout{** WARNING: IEEEtran.bst: No hyphenation pattern has been}%
\typeout{** loaded for the language `#1'. Using the pattern for}%
\typeout{** the default language instead.}%
\else
\language=\csname l@#1\endcsname
\fi
#2}}
\providecommand{\BIBdecl}{\relax}
\BIBdecl

\bibitem{Wyner:BSTJ:75}
A.~D. Wyner, ``The wire-tap channel,'' \emph{Bell Syst. Tech. J.}, vol.~54,
  no.~8, pp. 1355--1387, Oct. 1975.

\bibitem{Csiszar:IT:78}
I.~Csisz{\'{a}}r and J.~K{\"{o}}rner, ``Broadcast channels with confidential
  messages,'' \emph{IEEE Trans. Inf. Theory}, vol.~24, no.~3, pp. 339--348, May
  1978.

\bibitem{Leung-Yan-Cheong:IT:78}
S.~K. Leung-Yan-Cheong and M.~Hellman, ``The {G}aussian wire-tap channel,''
  \emph{IEEE Trans. Inf. Theory}, vol.~24, no.~4, pp. 451--456, July 1978.

\bibitem{Liang:IT:06}
Y.~Liang and H.~V. Poor, ``Multiple access channels with confidential
  messages,'' \emph{IEEE Trans. Inf. Theory}, vol.~54, no.~3, pp. 976--1002,
  Mar. 2008.

\bibitem{Liu:ISIT:06}
R.~Liu, I.~Maric, R.~Yates, and P.~Spasojevic, ``The discrete memoryless
  multiple access channel with confidential messages,'' in \emph{Proc.\ IEEE
  Int. Symp. Information Theory}, Seattle, WA, USA, July 2006.

\bibitem{Tekin:IT:07}
E.~Tekin and A.~Yener, ``The general {G}aussian multiple-access and two-way
  wire-tap channels: Achievable rates and cooperative jamming,'' \emph{IEEE
  Trans. Inf. Theory}, vol.~54, no.~6, Jun. 2008, to appear.

\bibitem{Tang:ITW:07}
X.~Tang, R.~Liu, P.~Spasojevic, and H.~V. Poor, ``Multiple access channels with
  generalized feedback and confidential messages,'' in \emph{Proc.\ IEEE Inf.
  Theory Workshop}, Lake Tahoe, CA, USA, Sept. 2007.

\bibitem{Liu:IT:07}
R.~Liu, I.~Maric, P.~Spasojevic, and R.~Yates, ``Discrete memoryless
  interference and broadcast channels with confidential messages: Secrecy
  capacity regions,'' \emph{IEEE Trans. Inf. Theory}, vol.~54, no.~6, Jun.
  2008, to appear.

\bibitem{Liang:Allerton:07}
Y.~Liang, A.~Somekh-Baruch, H.~V. Poor, S.~Shamai, and S.~Verdu, ``Cognitive
  interference channels with confidential messages,'' in \emph{Proc.\ 45th
  Annual Allerton Conference on Commun.\, Contr.\, Computing}, Monticello, IL,
  USA, Sept. 2007.

\bibitem{Oohama:ITW:01}
Y.~Oohama, ``Coding for relay channels with confidential messages,'' in
  \emph{IEEE Inf. Theory Workshop}, Cairns, Australia, Sept. 2001, pp. 87--89.

\bibitem{Lai:IT:06}
L.~Lai and H.~{El~Gamal}, ``The relay-eavesdropper channel: Cooperation for
  secrecy,'' \emph{IEEE Trans. Inf. Theory}, Dec. 2006, submitted.

\bibitem{Yusel:CISS:07}
M.~Yuksel and E.~Erkip, ``The relay channel with a wire-tapper,'' in
  \emph{Proc.\ 41st Annual Conference on Information Sciences and Systems},
  Baltimore, MD, Mar. 2007.

\bibitem{Maurer:EUROCRYPT:00}
U.~Maurer and S.~Wolf, ``Information-theoretic key agreement: From weak to
  strong secrecy for free,'' in \emph{EUROCRYPT, Lecture Notes in Computer
  Science}, vol. 1807.\hskip 1em plus 0.5em minus 0.4em\relax Springer-Verlag,
  2000, pp. 351--368.

\end{thebibliography}

\end{document}